\declaretheorem[numberwithin=section,
	name=Theorem
	]{theorem}
\declaretheorem[sibling=theorem,
	name=Definition
	]{definition}
\declaretheorem[sibling=theorem,
	name=Lemma
	]{lemma}
\newcommand{\comb}[1]{\ensuremath{\mathsf{#1}}}
\DeclareMathOperator{\ter}{Ter}
\DeclareMathOperator{\lam}{\ensuremath{\Lambda}}
\newcommand{\sig}{\Sigma}
\newcommand{\sigSF}{\sig_{\textsf{SF}}}
\newcommand{\eqs}{\mathcal{E}}
\newcommand{\sfreduction}{\rightarrow_{\textsf{\relsize{-2}{SF}}}}
\newcommand{\CurryAppSF}{\textsf{SF}^{\mathcal{C}}_{\mathcal{@}}}
\newcommand{\casfreduction}{\rightarrow_{\textsf{\relsize{-2}{SF}}^{\mathcal{C}}_{@}}}
\newcommand{\fsymb}[1]{\mathop{\mathsf{#1}}}
\newcommand{\ct}[1]{\langle #1 \rangle}
\newcommand{\zero}{\fsymb{zero}}
\newcommand{\suc}{\fsymb{succ}}
\newcommand{\add}{\fsymb{add}}
\newcommand{\app}{\fsymb{app}}
\newcommand{\freduce}{\fsymb{f\textsf{-}reduce}}
\newcommand{\sem}[1]{[\![#1]\!]}
\newcommand{\Bohm}{B\"{o}hm\xspace}
\title{Encoding the Factorisation Calculus \\
  \smaller{(Representing the Intensional in the Extensional)}}
\author{Reuben N. S. Rowe
\institute{Department of Computer Science\\
University College London}
\email{r.rowe@ucl.ac.uk}
}
\begin{document}
\maketitle

\begin{abstract}
  Jay and Given-Wilson have recently introduced the Factorisation (or SF-) calculus as a minimal fundamental model of \emph{intensional} computation. It is a combinatory calculus containing a special combinator, \comb{F}, which is able to examine the internal structure of its first argument. The calculus is significant in that as well as being combinatorially complete it also exhibits the property of structural completeness, i.e.~it is able to represent any function on terms definable using pattern matching on arbitrary normal forms. In particular, it admits a term that can decide the structural equality of any two arbitrary normal forms.
  
  Since SF-calculus is combinatorially complete, it is clearly at least as powerful as the more familiar and paradigmatic Turing-powerful computational models of $\lambda$-calculus and Combinatory Logic. Its relationship to these models in the converse direction is less obvious, however. Jay and Given-Wilson have suggested that SF-calculus is strictly more powerful than the aforementioned models, but a detailed study of the connections between these models is yet to be undertaken.
  
  This paper begins to bridge that gap by presenting a faithful encoding of the Factorisation Calculus into the $\lambda$-calculus preserving both reduction and strong normalisation. The existence of such an encoding is a new result. It also suggests that there is, in some sense, an equivalence between the former model and the latter. We discuss to what extent our result constitutes an equivalence by considering it in the context of some previously defined frameworks for comparing computational power and expressiveness.
\end{abstract}

\section{Introduction}

Mathematical models of computation are useful in studying the formal properties of programming practice. Indeed, the field of computing today arose partly out of the study of such abstract models: namely Turing Machines \cite{Turing37a}, the $\lambda$-calculus \cite{Church36}, and Combinatory Logic \cite{Curry}, which are consequently considered to be archetypal computational models. 
It is standard practice to qualify the abilities, or expressiveness, of a formal model of computation by demonstrating that it may \emph{simulate} (and be simulated by) the operation of other formal models. This is the very essence of the notion of \emph{Turing-completeness}, which encapsulates the intuition that a model may carry out any operation that is `effectively computable'.
To construct such a simulation one must first give an injective mapping, showing how the terms of the source model may be represented by terms of the target. For example, this is the basis behind the process of G\"{o}delization and the Church encoding of natural numbers \cite{Kleene36}. Two basic properties are then required: that each atomic operational step of the source model is reflected by one or more steps of the target, and that a program of the target model \emph{terminates} whenever the corresponding source program does. The former is a key ingredient of Landin's influential work on comparing languages \cite{Landin66}, while the latter is used as a criterion for comparing expressiveness by, e.g., Felleisen \cite{Felleisen91}. Formal definitions of encodings incorporating these properties, referred to as ``faithful'', are already in use by the 90s, e.g.~in \cite{AbadiCV96}, and are now common (see e.g.~\cite{Gorla10}).

Recently, Jay's work on formal models of generic pattern matching \cite{JayBook} have led, in collaboration with Given-Wilson, to the formulation of the \emph{Factorisation Calculus}. This is a combinatory calculus comprising two combinators: the $\comb{S}$ combinator, familiar from Combinatory Logic; and a new $\comb{F}$ combinator. The purpose of the latter is to enable arbitrary (head) normal terms to be \emph{factorised}, that is split into their constituent parts, thereby allowing the examination of the internal structure of terms. This endows Factorisation Calculus with an interesting and powerful property: that of \emph{structure completeness}. This means that any function on terms themselves definable by pattern matching over \emph{arbitrary} normal forms is \emph{representable}. Thus, Factorisation Calculus can be viewed as a minimal, fundamental model characterising not only the abstract notion of \emph{pattern matching} but also \emph{intensional computation}. 

Jay and Given-Wilson show that Factorisation Calculus is Turing-complete, demonstrating a straightforward simulation of Combinatory Logic in their calculus. Moreover, due to its structural completeness, there is a term of Factorisation Calculus which can decide the structural equality of any two arbitrary normal forms. Conversely, factorisation and structural equality of normal forms \emph{cannot} be so represented in $\lambda$-calculus and Combinatory Logic, thus these models are \emph{not} structure complete. This hints at some sort of disparity in the expressivity of the two models. In their original and subsequent research \cite{Given-WilsonJ11,Given-Wilson14,JayV14}, Jay and Given-Wilson speculate that the added expressive power may manifest itself in a non-existence result for simulations of Factorisation Calculus in $\lambda$-calculus, but this is not pursued in detail.

We show that there \emph{does} exist a simulation of Factorisation Calculus within $\lambda$-calculus. The existence of such a simulation has not been demonstrated before, and this is the primary contribution of our paper. The simulation is made possible by a construction due to Berarducci and \Bohm, which shows how to encode a certain class of term rewriting systems within $\lambda$-calculus. We show that Factorisation Calculus can be simulated by such a term rewriting system, whence the result follows. In the classical framework, our result signifies that Factorisation Calculus is no more powerful than $\lambda$-calculus. Thus there appears to be a mismatch between our result and the structure completeness property that the standard simulation-based notion of equivalence does not account for. To begin to try and resolve this, we consider some research in the literature which refines the concept of computational equivalence and discuss how our result relates to this.

\paragraph{Outline} The rest of this paper is organised as follows. \cref{sec:Calculus} recalls Jay and Given-Wilson's Factorisation Calculus and its basic properties. \cref{sec:NormalSolutions} describes Berarducci and {\Bohm}'s construction for encoding so-called canonical rewrite systems within the $\lambda$-calculus. In \cref{sec:Encoding} we present our technical contribution: a simulation of the Factorisation Calculus in the $\lambda$-calculus via this construction. \cref{sec:Discussion} then discusses, in light of our results, how the relative expressiveness of Factorisation Calculus and $\lambda$-calculus may be characterised. \cref{sec:Conclusions} concludes and remarks on areas for future work.

\section{Factorisation Calculus}
\label{sec:Calculus}

We begin by presenting Jay and Given-Wilson's Factoriation Calculus itself, and review its principal properties. Factorisation Calculus, or more accurately SF-calculus\footnote{We may say that any combinatory calculus that is structure complete is \emph{a} factorisation calculus.}, is a combinatory calculus whose terms are those of the free algebra over the two-element signature containing the combinators $\comb{S}$ and $\comb{F}$, which each reduce upon being applied to three arguments. The former is the familiar combinator from Combinatory Logic \cite{Curry} which applies its first and second arguments to duplicates of its third. The $\comb{F}$ combinator, on the other hand, introduces new capabilities in the form of \emph{factorisation}: it is able to examine the \emph{internal} structure of its first argument and process its second and third in different ways depending on whether that argument is \emph{atomic} (i.e.~itself a combinator) or \emph{compound} (i.e.~a partial application). To illustrate this, consider how the $\comb{F}$ combinator reduces in the following two instances:
\begin{align*}
  \comb{F}\,\comb{S}\,M\,N &\rightarrow M
  &
  \comb{F}\,(\comb{S}\,X)\,M\,N &\rightarrow N\,\comb{S}\,X
\end{align*}
Observe that when the first argument is atomic, it eliminates its third argument and returns its second. On the other hand, when the first argument is compound it eliminates its second argument and \emph{factorises} the first into its left- and right-hand constituent components, passing these \emph{separately} to its third argument. 

Formally, SF-calculus is defined as follows.
\begin{definition}[{SF-calculus \cite[\textsection~4]{Given-WilsonJ11}}]
  The SF-calculus is a combinatory rewrite system over terms (ranged over by uppercase roman letters $M$, $N$, etc.) given by the following grammar:
  \begin{equation*}
    M,\,N \Coloneqq \comb{S} \;\mid\; \comb{F} \;\mid\; M\,N
  \end{equation*}
  Terms of the form $\comb{S}$, $\comb{F}$, $\comb{S}\,M$, $\comb{F}\,M$, $\comb{S}\,M\,N$, or $\comb{F}\,M\,N$ (i.e.~partially applied combinators) are called \emph{factorable forms}. Reduction of terms is the smallest contextually closed binary relation $\sfreduction$ on terms (with the reflexive transitive closure denoted by $\sfreduction^{\ast}$) satisfying:
  \begin{align*}
    \comb{S} \,M\,N\,X &\sfreduction M\,X\,(N\,X) \\
    \comb{F}\,O\,M\,N &\sfreduction M && \text{if $O$ is $\comb{S}$ or $\comb{F}$} \\
    \comb{F}\,(P\,Q)\,M\,N &\sfreduction N\,P\,Q && \text{if $P\,Q$ is a factorable form}
  \end{align*}
\end{definition}

Reduction of SF-calculus is confluent, and the $\comb{K}$ combinator of Combinatory Logic can be represented in SF-calculus by $\comb{F}\,\comb{F}$ (also, indeed, by $\comb{F}\,\comb{S}$). Thus, there is a trivial encoding of Combinatory Logic in SF-calculus which preserves reduction and strong normalisation \cite{Given-Wilson14}.

The behaviour of the $\comb{F}$ combinator gives SF-calculus an \emph{intensional} quality: one may define higher order functions in SF-calculus which discriminate between functions whose \emph{implementations} are different even when those functions are \emph{extensionally} equal (i.e.~produce identical outputs for identical inputs). For example, for any normal form $X$, the term $I_{X} \equiv \comb{S}\,(\comb{F}\,\comb{F})\,X$ implements the identity function (i.e.~$I_{X}\,M \sfreduction^{\ast} M$ for all $M$) and thus all such terms are extensionally equal. However, an SF-term $T$ can be constructed which distinguishes them (by behaving as $T\,I_{X} \sfreduction^{\ast} X$). Moreover, one can construct an SF-term that can decide the equality of any two \emph{arbitrary} normal forms.

The intensional behaviour of SF-calculus is formally characterised by a property called \emph{structure completeness}, which captures the notion that every \emph{symbolic computation} (i.e.~Turing-computable symbolic function) on normal forms is represented by some term.

\begin{definition}[{Structure Completeness \cite[\textsection~7-8]{Given-WilsonJ11}}]
  Let $\mathcal{C}$ be a confluent combinatory calculus whose terms include variables, with reduction relation $\rightarrow_{\mathcal{C}}^{\ast}$. Define \emph{patterns} to be the \emph{linear} normal forms (i.e.~containing no more than one occurrence of each variable), and \emph{matchable forms} to be partially applied combinators.
  \begin{enumerate}[nosep]
    \item A \emph{match} $\{ U / P \}$ of a pattern $P$ against a term $U$ may be defined to \emph{succeed} with a substitution of terms for variables, or \emph{fail} as follows (where $\mathsf{Id}$ denotes the identity function and $\uplus$ the disjoint union of substitutions with match failure as an absorbing element):
    \begin{gather*}
      \begin{gathered}
        \{ U / x \} = [U/x] 
        \qquad
        \{ A / A \} = \mathsf{Id} \\
        \{ U\,V / P\,Q \} = \{ U / P \} \uplus \{ V / Q \} \\
        \{ U / P \} = \mathsf{fail}
      \end{gathered}
      \qquad
      \begin{aligned}
        & \text{(if $A$ atomic)} \\
        & \text{(if\, $U\,V$ a compound)} \\
        & \text{(otherwise, if $U$ matchable)}
      \end{aligned}
    \end{gather*}
    \item A \emph{case} is an equation of the form $P = M$ where $P$ is a pattern and $M$ an arbitrary term, which defines a \emph{symbolic function} $\mathcal{G}$ on terms by $\mathcal{G}(U) = \sigma\,(M)$ if $\{ U / P \}$ succeeds with substitution $\sigma$, and $\mathcal{G}(U) = U$ if it fails.
    \item A confluent combinatory calculus is \emph{structure complete} if for every pattern $P$ and term $M$, there is some term $G$ such that $G\,U \rightarrow_{\mathcal{C}}^{\ast} \mathcal{G}(U)$ for every term $U$ on which $\mathcal{G}$ is defined; i.e.~the symbolic function defined by every case is represented by some term.
  \end{enumerate}
\end{definition}
\noindent
Structure completeness subsumes combinatorial completeness since $\lambda x . M$ is given by the case $x = M$.
\begin{theorem}[{\cite[Cor.~8.4]{Given-WilsonJ11}}]
\label{thm:StructuralCompleteness}
  SF-calculus is structure complete.
\end{theorem}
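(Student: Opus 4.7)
The plan is to induct on the structure of the pattern $P$, constructing for each case $P = M$ a term $G_P^M$ implementing the associated symbolic function $\mathcal{G}$. A preliminary observation is that SF-calculus is combinatorially complete: taking $\comb{K} \equiv \comb{F}\,\comb{F}$ (which satisfies $\comb{K}\,M\,N \sfreduction^{\ast} M$) together with $\comb{S}$, the standard bracket-abstraction translation yields $\lambda$-abstraction, and I freely use $\lambda$-notation below to describe the target terms.

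For the base cases: if $P$ is a variable $x$, take $G_P^M \equiv \lambda x.\,M$. If $P$ is an atom $A \in \{\comb{S}, \comb{F}\}$, I construct a term that, on input $U$, first uses $\comb{F}\,U$ to discriminate atomic from compound (in the compound case returning $U$ to signal match failure) and then tests whether the resulting atom actually equals $A$. For the inductive step $P \equiv P_1\,P_2$, assume by induction that matchers $\mu_{P_1}$ and $\mu_{P_2}$ for the subpatterns have been built. Following a continuation-passing discipline, define $\mu_P$ so that, given input $U$ and a continuation $k$, it uses $\comb{F}\,U$ to branch on whether $U$ is a factorable compound $U_1\,U_2$; in the positive case it invokes $\mu_{P_1}\,U_1\,(\lambda\vec{x}_1.\,\mu_{P_2}\,U_2\,(\lambda\vec{x}_2.\,k\,\vec{x}_1\,\vec{x}_2))$, and in the negative case signals failure. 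Linearity of $P$ ensures the variable sets $\vec{x}_1$ and $\vec{x}_2$ are disjoint, so the continuation composition is well-formed. The top-level $G_P^M$ then invokes $\mu_P\,U\,(\lambda\vec{x}.\,M)$ and arranges to fall back to $U$ whenever any submatch reports failure.

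I expect the main obstacle to be the atomic case: $\comb{F}$ on its own trivially discriminates atomic inputs from compound ones, but distinguishing the two specific atoms $\comb{S}$ and $\comb{F}$ from one another is less obvious because SF-calculus has no primitive Booleans. The key observation is that $\comb{S}$ and $\comb{F}$ reduce in characteristically different ways when saturated with three arguments—the former duplicates its third argument, while the latter, applied to an atomic first argument, simply returns its second—so by carefully chosen probing applications one can manufacture terms whose normal forms serve as Church-style Booleans signalling which atom was supplied. The other delicate point, threading the failure signal through the continuation-passing recursion, can be handled uniformly by representing match outcomes as such Booleans, selecting between the continuation result and the original input $U$ at the top level; once both ingredients are in place, the induction closes without further incident.
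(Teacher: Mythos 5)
The paper does not prove this theorem; it imports it verbatim from Given-Wilson and Jay (their Cor.~8.4), so there is no in-paper proof to compare against. Your sketch follows essentially the same route as the cited source: induction on the structure of the pattern, bracket abstraction via $\comb{S}$ and $\comb{K} \equiv \comb{F}\,\comb{F}$ to justify $\lambda$-notation, the $\comb{F}$ combinator to factorise compound matchable forms, and a probing construction to separate the two atoms. You correctly isolate the one genuinely non-obvious step: $\comb{F}$ by itself cannot tell $\comb{S}$ from $\comb{F}$ in its first argument, since $\comb{F}\,\comb{S}\,M\,N$ and $\comb{F}\,\comb{F}\,M\,N$ both reduce to $M$. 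Your proposed remedy (saturate the unknown atom so that its characteristic reduction behaviour becomes observable) does work; for instance, with $\comb{K} \equiv \comb{F}\,\comb{F}$ the term $a\,\comb{K}\,\comb{K}\,\comb{S}$ reduces to the atom $\comb{S}$ when $a \equiv \comb{S}$ and to the compound $\comb{S}\,\comb{F}\,\comb{F}$ when $a \equiv \comb{F}$, and one further application of $\comb{F}$ to this result separates the two outcomes into Booleans. The only under-specified point is the propagation of failure through the continuation-passing recursion: a bare Boolean outcome discards the matched substitution, so you need an option-style encoding whose success branch carries the instantiated continuation, but this is routine given combinatorial completeness. Note also that the definition of structure completeness only demands correct behaviour on terms $U$ for which the symbolic function is defined, so your matcher need not handle non-matchable or non-factorable inputs at all, which removes the remaining edge cases.
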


The $\comb{F}$ combinator itself represents a symbolic computation $\mathcal{F}$, namely that of factorisation:
\begin{align*}
  \mathcal{F}\,(A,\,M,\,N) &= M && \text{if $A$ is atomic} \\
  \mathcal{F}\,(P\,Q,\,M,\,N) &= N\,P\,Q && \text{if $P\,Q$ is compound}
\end{align*}
A significant (and arguably remarkable) fact is that $\mathcal{F}$ cannot be represented in Combinatory Logic (for definitions of atomic and compound appropriate thereto). 
\begin{theorem}[{\cite[Thm.~3.2]{Given-WilsonJ11}}]
\label{thm:NoFactorisationInCL}
  Factorisation of SK-combinators is a symbolic computation that is not representable in Combinatory Logic.
\end{theorem}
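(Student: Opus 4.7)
The plan is to derive a contradiction from the assumption that some closed CL term $F$ represents the symbolic computation $\mathcal{F}$: that is, $F\,A\,M\,N \rightarrow^{*} M$ for each atomic $A \in \{\comb{S},\comb{K}\}$ and $F\,(P\,Q)\,M\,N \rightarrow^{*} N\,P\,Q$ for each compound factorable form $P\,Q$. The strategy is to expose the intensional character of $\mathcal{F}$: it can distinguish two extensionally equivalent normal forms by looking at their syntax, whereas a closed CL term can only observe its inputs through their applicative behaviour, and so ought to be unable to make such distinctions.

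The natural pair of witnesses is $I_K = \comb{S}\,\comb{K}\,\comb{K}$ and $I_S = \comb{S}\,\comb{K}\,\comb{S}$: each is a compound factorable form in CL-normal form, and both act as the identity, since $I_K\,X \rightarrow^{*} X$ and $I_S\,X \rightarrow^{*} X$ for every $X$; in particular, $I_K$ and $I_S$ have identical \Bohm trees. The specification, however, would force
\[
  F\,I_K\,m\,n \,\rightarrow^{*}\, n\,(\comb{S}\,\comb{K})\,\comb{K}
  \qquad\text{and}\qquad
  F\,I_S\,m\,n \,\rightarrow^{*}\, n\,(\comb{S}\,\comb{K})\,\comb{S},
\]
two syntactically distinct normal forms with distinct \Bohm trees. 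Since $\lambda y. F\,y\,m\,n$ is a $\lambda$-definable (hence CL-definable) context, and such contexts preserve \Bohm-tree equivalence, this yields the contradiction. As an explicit route, I would analyse the head reduction of $F\,y\,m\,n$ with $y, m, n$ distinct fresh variables, first arguing that the head symbol of its head normal form must be $y$---if it were $m$ or $n$, the heads of the normal forms of $F\,\comb{S}\,m\,n$ and $F\,(\comb{K}\,\comb{K})\,m\,n$ would have to agree, whereas the specification requires them to be $m$ and $n$ respectively. So $F\,y\,m\,n \rightarrow^{*} y\,P_{1}\,\ldots\,P_{k}$, and substituting $I_K$ or $I_S$ for $y$ and inductively exploiting identity reductions of $I_K$ and $I_S$ in each position where $y$ was applied to arguments shows that the two resulting reducts must coincide.

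The main obstacle is making this last indistinguishability step rigorous, because pure syntactic reasoning in untyped CL without $\eta$ is not enough: $I_K$ and $I_S$ are themselves distinct normal forms and the trivial identity context already separates them. The inductive argument must therefore verify that every residual occurrence of $y$ in the $P_i$ is eventually exposed to at least one argument, so that identity reduction can fire and erase the difference between $I_K$ and $I_S$; occurrences that remain passive would leave the two reducts visibly distinct. The cleanest way around this is to appeal to the sensible $\lambda$-theory $\mathcal{H}^{*}$ (equivalently, \Bohm-tree semantics), in which $I_K$ and $I_S$ are equated and definable contexts are automatically congruent, short-circuiting the fragile head-reduction bookkeeping.
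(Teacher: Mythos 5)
The paper itself gives no proof of this theorem --- it is imported verbatim from Given-Wilson and Jay --- so the comparison is against their argument. Your witnesses and overall strategy coincide with theirs: assume a representing combinator $F$, apply it to the two extensionally equal compounds $\comb{S}\,\comb{K}\,\comb{K}$ and $\comb{S}\,\comb{K}\,\comb{S}$, and note that the specification forces separable outputs. Where you diverge is in how the contradiction is closed, and here the standard closure is much lighter than either of your finishes. Translate everything into $\lambda$-calculus via the usual map $(\cdot)_{\lambda}$ with $\comb{S} \mapsto \lambda xyz.xz(yz)$ and $\comb{K} \mapsto \lambda xy.x$, under which weak CL-reduction is simulated by $\rightarrow_{\beta}^{\ast}$. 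Then $(\comb{S}\,\comb{K}\,\comb{K})_{\lambda} =_{\beta} (\comb{S}\,\comb{K}\,\comb{S})_{\lambda} =_{\beta} \lambda z.z$, so $(F\,(\comb{S}\,\comb{K}\,\comb{K})\,M\,N)_{\lambda} =_{\beta} (F\,(\comb{S}\,\comb{K}\,\comb{S})\,M\,N)_{\lambda}$ for any $M,N$; choosing $N = \comb{K}\,(\comb{S}\,\comb{K}\,\comb{K})$ makes the two specified outputs reduce to $\comb{K}$ and $\comb{S}$ respectively, whose translations are distinct $\beta$-normal forms, contradicting Church--Rosser. This sidesteps the head-reduction bookkeeping entirely --- which is indeed the fragile part of your sketch, since tracking which residuals of $y$ are eventually ``exposed to an argument'' is hard to make airtight --- and it needs neither $\mathcal{H}^{\ast}$ nor \Bohm trees: plain $\beta$-conversion already equates the two inputs, and confluence already separates the two outputs. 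Your $\mathcal{H}^{\ast}$ fallback is sound ($\mathcal{H}^{\ast}$ is a congruent $\lambda$-theory containing $=_{\beta}$, so it identifies the inputs, and it distinguishes $n\,(\comb{S}\,\comb{K})\,\comb{K}$ from $n\,(\comb{S}\,\comb{K})\,\comb{S}$ because $\comb{K}$ and $\comb{S}$ are separable), but it imports a semantic theory where an appeal to confluence suffices. One point worth making explicit in any write-up: the passage to a congruent equational theory must go through the translation $(\cdot)_{\lambda}$, since weak CL-conversion itself does not equate $\comb{S}\,\comb{K}\,\comb{K}$ and $\comb{S}\,\comb{K}\,\comb{S}$, exactly the obstacle you correctly identify.
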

The equality predicate on normal forms also has no representation in Combinatory Logic. Thus, there exist (symbolic) functions, which are clearly `computable' from an empirical point of view, that are not (directly) representable in Combinatory Logic (there is also a similar result for $\lambda$-calculus \cite{Barendregt}). This result clearly points towards some form of added expressivity possessed by SF-calculus over the archetypal computational models. It is to this issue that we will return in \Cref{sec:Discussion}.

\section{Strongly Normalising Solutions of Equational Systems in $\lambda$-calculus}
\label{sec:NormalSolutions}

We now reiterate the interpretation result of Berarducci and \Bohm \cite{BerarducciB92}, upon which our technical contribution rests. Essentially, this result says that systems of equations for a particular class of term algebras can be given solutions in the $\lambda$-calculus such that the representation of each atomic term of the algebra is \emph{strongly normalising} thus having a \emph{normal form}. Moreover, when the set of equations is interpreted as a rewrite system the encoding of terms preserves reduction and strong normalisation.

We assume the usual definitions of the $\lambda$-calculus without further explanation (readers may refer to \cite{Barendregt} for details), with $\lam$ denoting the set of lambda terms, $\rightarrow_{\beta}^{\ast}$ denoting the (multi-step) $\beta$-reduction relation, and $=_{\beta}$ denoting $\beta$-equality (i.e.~the equivalence relation on lambda terms induced by $\beta$-reduction). Furthermore, we also assume the familiar algebraic notion of the set $\ter(\sig)$ of ($\sig$-)\emph{terms} over the signature (set of \emph{function symbols}, each with an associated arity) $\sig$. We can then also consider the set $\lam(\sig)$ of \emph{extended} lambda terms (i.e.~lambda terms which may contain $\sig$-terms); notice that both $\ter(\sig)$ and $\lam$ are (strict) subsets of $\lam(\sig)$.

\begin{definition}[Canonical Systems of Equations]
\label{def:CanonicalEquations}
  Fix a signature $\sig$ and let $\eqs$ be a set of equations between terms $t \in \ter(\sig)$. We say that $\eqs$ is \emph{canonical} if $\sig$ can be partitioned into two disjoint subsets $\sig_0$ and $\sig_1$ (i.e.~$\sig = \sig_0 \cup \sig_1$) such that: each equation in $\eqs$ is of the form $\mathop{f} (\mathop{c} (x_1, \ldots, x_m), y_1, \ldots, y_n) = t$ with $c \in \sig_0$ and $f \in \sig_1$ and where the variables $x_1, \ldots, x_m, y_1, \ldots, y_n$ are all distinct and form a superset of the variables in the term $t$; and for each distinct pair $(c, f) \in \sig_0 \times \sig_1$ there is at most one equation in $\eqs$ of this form. We say that $\eqs$ is \emph{complete} if for each distinct pair $(c, f) \in \sig_0 \times \sig_1$ there is exactly one such equation in $\eqs$.
\end{definition}

Canonical systems of equations, then, partition the signature into a set $\sig_0$ of (algebraic datatype) \emph{constructors}, and $\sig_1$ of \emph{programs} defined by pattern matching over the constructors on the first argument. Notice that any incomplete canonical system of equations can trivially be made complete by adding equations for the missing cases which simply project one of the function's arguments\footnote{Alternatively, one might want to introduce a new nullary constructor (denoting an `error' value) and add equations for the missing cases that simply return this value.}.

As an example of a canonical system of equations, we may observe that the usual recursive definition of addition over the datatype of (Peano) natural numbers is such a system:
  \begin{equation*}
    \add \, (\zero , x) = x
    \qquad
    \add \, (\suc \, (x) , y) = \suc \, (\add \, (x, y))
  \end{equation*}
We have a signature containing one function symbol $\mathsf{add}$, one nullary constructor $\mathsf{zero}$, and one unary constructor $\mathsf{succ}$; moreover in this simple case, the equation system is already complete. In fact, every partial recursive function (on natural numbers) can be defined by a canonical system of equations \cite{BerarducciB92,BohmPG94}.

Given an equational system $\eqs$ over a signature $\sig$, we can also take it to define a term rewriting system on $\ter(\sig)$ by reading each equation as a \emph{rewrite} rule, i.e.~$\mathop{f_i} (\mathop{c_j} (x_1, \ldots, x_m), y_1, \ldots, y_n) \rightarrow t$. We will write $\rightarrow_{\eqs}$ for the (one-step) reduction relation of the rewrite system defined by $\eqs$ in this way (i.e.~the smallest binary relation on terms satisfying the rewrite rules and closed under substitution and contexts), and $\rightarrow^{\ast}_{\eqs}$ for its reflexive, transitive closure (i.e.~multi-step reduction).
Ultimately, the aim is to interpret equational systems (and their associated rewrite systems) within $\lambda$-calculus.

\begin{definition}[Interpretations]
  A \emph{representation} of the signature $\sig$ is a function $\phi : {\sig \rightarrow \Lambda}$ from the function symbols of $\sig$ to (closed) lambda terms, and induces a map $(\cdot)^{\phi} : {\lam(\sig) \rightarrow \Lambda}$ in the obvious way, namely by $x^{\phi} = x$, $(\lambda x . M)^{\phi} = \lambda x . M^{\phi}$, $(MN)^{\phi} = {M^{\phi}}{N^{\phi}}$, and for $f \in \sig$, ${f \, (t_1, \ldots, t_n)}^{\phi} = {\phi(f) \, {t_1}^{\phi} \, \ldots \, {t_n}^{\phi}}$. We say that a representation $\phi$ \emph{satisfies} (or \emph{solves}) $\eqs$ if for each equation $t_1 = t_2$ (and corresponding rewrite rule $t_1 \rightarrow t_2$) in $\eqs$ we have ${t_1}^{\phi} =_{\beta} {t_2}^{\phi}$ (and correspondingly also ${t_1}^{\phi} \rightarrow_{\beta}^{\ast} {t_2}^{\phi}$). When a representation $\phi$ satisfies $\eqs$, we say that $\phi$ is an \emph{interpretation} (or a \emph{solution}) of $\eqs$ within $\lambda$-calculus.
\end{definition}

The following construction gives a special kind of representation for canonical systems of equations.

\begin{definition}[Canonical Representations]
\label{def:CanonicalRepresentation}
  Let $\eqs$ be a canonical system of equations that partitions the signature $\sig$ into constructors $\sig_0 = \{c_1, \ldots, c_r\}$ and programs $\sig_1 = \{f_1, \ldots, f_k\}$. Without loss of generality we may assume that $\eqs$ is complete, and so for each $1 \leq i \leq k$ and $1 \leq j \leq r$ let $b_{(i,j)}$ denote the term $t$ such that $\mathop{f_i} (\mathop{c_j} (x_1, \ldots, x_m), y_1, \ldots, y_n) = t \in \eqs$. 
  
  We will make use of the following notational abbreviations:
  \begin{itemize}[nosep,label=-,labelindent=\parindent, leftmargin=2\parindent]
    \item 
    Let $\ct{t_1, \ldots, t_n}$ denote the Church n-tuple, i.e.~$\lambda x . x t_1 \ldots t_n$.
    \item 
    Let $\Pi^{n}_{k}$ (where $1 \leq k \leq n$) be the $n$-ary $k$\textsuperscript{th} projection function, i.e.~$\lambda x_1 \ldots x_n . x_k$.
    \item 
    For $k \geq i > 1$, let $t_i, \ldots, t_k, \ldots, t_{i-1}$ denote the cyclic permutation of $t_1, t_2, \ldots, t_k$ beginning with $t_i$; (in an abuse of notation we may also take $t_i, \ldots, t_k, \ldots, t_{i-1} = t_1, t_2, \ldots, t_k$ when $i = 1$).
  \end{itemize}
  We now define two disjoint representations $\vartheta$ and $\zeta$ for constructors and programs respectively.
  \begin{description}[nosep,labelindent=\parindent, leftmargin=2\parindent, format=\normalfont]
    \item[(Representation of Constructors)] 
    For each $1 \leq i \leq r$, we define the representation of the constructor $c_i$ as follows (where $n$ is the arity of $c_i$):
    \begin{equation*}
      \vartheta(c_i) = \lambda x_1 \ldots x_n f . f \Pi^{r}_{i} x_1 \ldots x_n f
    \end{equation*}
    \item[(Representation of Programs)] 
    We choose $k$ distinct fresh variables $v_1, \ldots, v_k$ not occurring in $\eqs$ and fix a `pre-representation', $\psi$, of $\sig_1$ defined by $\psi(f_i) = \ct{v_i, \ldots, v_k, \ldots, v_{i-1}}$ for each $1 \leq i \leq k$.
    Using this representation, and the representation of constructors defined above, we then define $k \times r$ lambda terms $t_{(i,j)}$ ($1 \leq i \leq k$, $1 \leq j \leq r$), using the equations in $\eqs$ as follows:
    \begin{equation*}
      t_{(i,j)} = \lambda x_1 \ldots x_m v_i \ldots v_k \ldots v_{i-1} y_1 \ldots y_n . ({b_{(i,j)}}^{\psi})^{\vartheta}
    \end{equation*}
    where $\mathop{f_i} (\mathop{c_j} (x_1, \ldots, x_m), y_1, \ldots, y_n) = b_{(i,j)} \in \eqs$ is the equation defining the behaviour of $f_i$ when given a datum constructed using $c_j$ as its first argument.
    We now define $k$ terms, each one a Church $r$-tuple collating the bodies of all the cases for one of the programs in $\sig_1$, as follows:
    \begin{equation*}
      t_i = \ct{t_{(i,1)}, \ldots, t_{(i,r)}}
    \end{equation*}
    (where $1 \leq i \leq k$).
    Each program is then represented by a Church $k$-tuple containing the collated representations of each program defintion, beginning with its own. That is, $\zeta$ is defined by:
    \begin{equation*}
      \zeta(f_i) = \ct{t_i, \ldots, t_k, \ldots, t_{i-1}} \quad \text{($1 \leq i \leq k$)}
    \end{equation*}
  \end{description}
  The representation $\phi = \vartheta \cup \zeta$ is called a \emph{canonical} representation of $\sig$ with respect to $\eqs$.
\end{definition}

To gain some insight into the construction defined above, one can observe that it is related to an encoding of data attributed to Scott\footnote{The citation can be found in Curry, Hindley and Seldin \cite[p. 504]{Curry2}.} (and thus commonly referred to in the literature as the Scott encoding), which has subsequently been developed by others (e.g.~\cite{Steensgaard-Madsen89,Mogensen92,JansenKP06,Stump09}). In the more familiar `standard' encoding of functions, a fixed-point combinator is used to solve any recursion in the definition. This has the effect of making recursion \emph{explicit}, and thus the representations of recursive functions have infinite expansions consisting of a `list' of distinct instances of the function body, one for each recursive call that may be made. Applying the function to a datum then corresponds to a \emph{fold} of the datum over this list, which discards the remaining infinity of recursive calls once the base case is reached. Therefore, as described by \Bohm et al. \cite[\textsection 3]{BohmPG94}, in this scheme functions are `diverging objects which, when applied to data, may ``incidentally'' converge'. In encodings of the Scott variety, the recursive nature of functions is kept \emph{implicit} and, while still triggered by application to a datum, only reproduced `on demand'. Hence we obtain finite objects which now `may ``incidentally'' diverge' when applied to data\footnote{This reversed form of the slogan is also due to \Bohm et al., and illustrates the dual nature of the Scott and Church encodings.}.

To explicate the particular encoding specified by \Cref{def:CanonicalRepresentation}, we point out that the representation of a constructor is a (lambda) function that takes in the appropriate number of arguments (the \emph{sub-data} of the datum that is subsequently constructed) and then waits to be given a function, which will be the program to be executed. Now, looking at how the constructor representation uses this function argument, we see that programs should expect to be given a projection function, followed by a number of sub-data, and then they are also passed \emph{a copy of themselves}. It is this final element which is the key to Scott-type encodings, and allows recursion to be kept implicit. Looking now at the representation of programs we see that they are Church $k$-tuples containing an element for each program defined by $\eqs$ (each of which is a Church $r$-tuple, where each element is a representation of one of the cases of that program's definition). Thus the representation of each program contains the definition of \emph{every} program defined by $\eqs$; in particular it will contain the definition of each program which it may itself invoke. To illustrate in more detail how the encoding works, we can consider the general reduction sequence of a term representing the application of some program $\comb{prog}_i$ to some arguments, the first of which is a datum constructed as $c_j\,(d_1, \ldots, d_m)$:
  \begin{align}
    \mathrlap{{(\comb{prog}_i \, (c_j \, (d_1, \ldots, d_m)) \, \comb{arg}_1 \, \ldots \, \comb{arg}_m )}^{\phi}
      = \ct{t_i, \ldots, t_{i-1}} \, {(c_j \, (d_1, \ldots, d_m))}^{\phi} \, \comb{arg}_1^{\phi} \, \ldots \, \comb{arg}_m^{\phi}
      } \quad& \notag \\
      &\rightarrow_{\beta}^{\ast} &&(\lambda x . x \, t_i \ldots t_{i-1}) \, (\lambda f . f \Pi^{r}_{j} d_1 \ldots d_m f) \, \comb{arg}_1^{\phi} \, \ldots \, \comb{arg}_m^{\phi} \label{eqn:ApplyProg} \\
      &\rightarrow_{\beta} &&(\lambda f . f \Pi^{r}_{j} d_1 \ldots d_m f) \, t_i \ldots t_{i-1} \, \comb{arg}_1^{\phi} \, \ldots \, \comb{arg}_m^{\phi} \label{eqn:PassProgs} \\
      &\rightarrow_{\beta} &&t_i \, \Pi^{r}_{j} d_1 \ldots d_m \, t_i \, t_{i+1} \ldots t_{i-1} \, \comb{arg}_1^{\phi} \, \ldots \, \comb{arg}_m^{\phi} \label{eqn:CaseSelection:Start} \\
      &= &&\ct{t_{i,1}, \ldots, t_{i,r}} \, \Pi^{r}_{j} d_1 \ldots d_m \, t_i \, t_{i+1} \ldots t_{i-1} \, \comb{arg}_1^{\phi} \, \ldots \, \comb{arg}_m^{\phi} \\
      &= &&(\lambda x . x \, t_{i,1} \ldots t_{i,r}) \, \Pi^{r}_{j} d_1 \ldots d_m \, t_i \, t_{i+1} \ldots t_{i-1} \, \comb{arg}_1^{\phi} \, \ldots \, \comb{arg}_m^{\phi} \\
      &\rightarrow_{\beta} &&\Pi^{r}_{j}  \, t_{i,1} \ldots t_{i,r} \, d_1 \ldots d_m \, t_i \, t_{i+1} \ldots t_{i-1} \, \comb{arg}_1^{\phi} \, \ldots \, \comb{arg}_m^{\phi} \\
      &\rightarrow_{\beta} &&t_{i,j} \, d_1 \ldots d_m \, t_i \, t_{i+1} \ldots t_{i-1} \, \comb{arg}_1^{\phi} \, \ldots \, \comb{arg}_m^{\phi} \label{eqn:CaseSelection:End} \\
      &= &&(\lambda x_1 \ldots x_m v_i \ldots v_k \ldots v_{i-1} y_1 \ldots y_n . ({b_{(i,j)}}^{\psi})^{\vartheta}) \, d_1 \ldots d_m \, t_i \, t_{i+1} \ldots t_{i-1} \, \comb{arg}_1^{\phi} \, \ldots \, \comb{arg}_m^{\phi} \notag \\
      &\rightarrow_{\beta}^{\ast} &&{(b_{(i,j)} \, [d_1 / x_1, \ldots, d_m / x_m, \comb{arg}_1 / y_1, \ldots, \comb{arg}_n / y_n])}^{\phi} \label{eqn:SubstFunBody}
  \end{align}
When the program is applied to a datum (\cref{eqn:ApplyProg}), its representation arranges to apply the datum first to the representations of each program beginning with its own, and then to the remainder of the arguments (\cref{eqn:PassProgs}). Then, the particular structure of the datum will reduce the expression to pick out the appropriate case of the program definition to be executed, and apply it to the sub-data and the representations of each program, having duplicated the program being executed (\crefrange{eqn:CaseSelection:Start}{eqn:CaseSelection:End}). This then reduces to the representation of the appropriate substitution instance of the function body (\cref{eqn:SubstFunBody}).

The result of Berarducci and \Bohm says that a canonical representation gives an interpretation that also preserves strong normalisation.
\begin{theorem}[Interpretation Theorem {\cite[Thm 3.4]{BerarducciB92}}]
\label{thm:Interpretation}
  Let $\sig$ be a signature and $\eqs$ a canonical set of equations for $\sig$; then any canonical representation $\phi$ for $\sig$ with respect to $\eqs$ is an interpretation of $\eqs$ within $\lambda$-calculus. In addition $(\cdot)^{\phi}$ preserves strong normalisation of closed terms.
\end{theorem}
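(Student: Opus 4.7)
The theorem has two parts. For the first (that $\phi$ satisfies $\eqs$), I would directly verify each equation $f_i(c_j(x_1,\ldots,x_m), y_1,\ldots,y_n) = b_{(i,j)}$ of $\eqs$ by computing the $\beta$-reduction of its encoded left-hand side. This is precisely the schematic reduction sequence already displayed in the preceding discussion of the encoding: $\zeta(f_i) = \ct{t_i,\ldots,t_{i-1}}$ applied to the encoded datum $\vartheta(c_j)\,d_1^\phi\cdots d_m^\phi$ first pulls the tuple of program bodies through the datum's representation; the embedded projector $\Pi^r_j$ then selects the case body $t_{(i,j)}$; and $m+k+n$ head $\beta$-steps substitute, in order, the sub-data $d_1,\ldots,d_m$, the cycled program tuple $t_i,\ldots,t_{i-1}$, and the remaining arguments $\comb{arg}_1,\ldots,\comb{arg}_n$. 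The resulting term is $({b_{(i,j)}}^\psi)^\vartheta$ with each placeholder $v_l$ replaced by $\zeta(f_l)$, which unfolds to $b_{(i,j)}^\phi$. Because this is a $\beta$-reduction and $\beta$-reduction is closed under substitutions and contexts, the computation lifts verbatim to a simulation of the whole rewrite relation $\rightarrow_{\eqs}$ by $\rightarrow_\beta^\ast$, which gives both the required $\beta$-equality and the corresponding reduction.

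For the second part, the conservative direction (if $M^\phi$ is $\beta$-SN then $M$ is $\eqs$-SN) is immediate from the simulation of part one: any infinite $\eqs$-reduction of $M$ would unfold to an infinite $\beta$-reduction of $M^\phi$. The difficult direction is the converse. My plan is a Tait-style reducibility argument: define a family of candidates $\mathcal{R}_n$ of strongly normalizing $\lambda$-terms of appropriate arity, closed under the operations that arise in the encoding (application to SN arguments, projection via $\Pi^r_j$, case selection, and the self-passing trick by which a program representation is simultaneously substituted into the body and propagated to subsequent recursive calls); then prove, by induction on the well-founded $\eqs$-reduction order of $M$ together with a subsidiary induction on term structure, that $M^\phi$ belongs to its associated candidate, whence $\beta$-SN.

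The main obstacle will be controlling the nested duplication inside the program encoding: a single rewrite $f_i(c_j(\vec d), \vec a) \rightarrow b_{(i,j)}[\vec d / \vec x, \vec a / \vec y]$ can plant further copies of $\zeta(f_i)$ inside the substituted body, so the reducibility invariants must be strong enough to survive these nested replications. A complementary standardization step — showing that any $\beta$-reduction of $M^\phi$ can be rearranged so that the head contractions which simulate $\eqs$-redexes are performed outermost-first — should let us convert a hypothetical infinite $\beta$-reduction of $M^\phi$ into an infinite $\eqs$-reduction of $M$, contradicting the assumed $\eqs$-SN. Once this standardization is in place the reducibility induction is essentially routine; it is the design of candidates that simultaneously absorb $\vartheta(c_i)$-style duplication and $\zeta(f_i)$-style self-passing that will be the real content of the proof.
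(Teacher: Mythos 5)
This theorem is not proved in the paper at all: it is imported from Berarducci and B\"ohm \cite{BerarducciB92}, and the only proof-like material present is the schematic reduction sequence in \crefrange{eqn:ApplyProg}{eqn:SubstFunBody}. Your argument for the first claim is exactly that computation and is essentially correct, modulo one slip: after the $m+k+n$ head steps the placeholders $v_l$ are replaced by the tuples $t_l$, not by $\zeta(f_l)$; it is the subterms $\ct{v_l,\ldots,v_{l-1}}$ introduced by the pre-representation $\psi$ that thereby become $\ct{t_l,\ldots,t_{l-1}}=\zeta(f_l)$, which is what makes the reduct equal to the $\phi$-image of the instantiated right-hand side. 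Closure of $\rightarrow_{\beta}^{\ast}$ under substitution and contexts then lifts this to a simulation of $\rightarrow_{\eqs}$, as you say.

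For the preservation of strong normalisation your plan departs from the actual proof and, as written, has a genuine gap. The argument of Berarducci and B\"ohm --- which the paper replicates at the term-rewriting level in \Cref{lem:SF-NFImagesSN,lem:PerpetualReductions,lem:Preservation:StrongNormalisation} --- uses neither reducibility candidates nor a standardization theorem. It is an induction on the length of the longest $\eqs$-reduction of the closed term $M$: if $M$ is an $\eqs$-normal form one shows directly that $M^{\phi}$ is $\beta$-SN (a base case your sketch does not address, and which is not trivial since $M^{\phi}$ is generally \emph{not} a $\beta$-normal form --- the bodies $({b_{(i,j)}}^{\psi})^{\vartheta}$ inside each $\zeta(f_l)$ already contain redexes); otherwise one contracts an \emph{innermost} $\eqs$-redex, whose arguments are then closed $\eqs$-normal forms with $\beta$-SN images, so every step of the simulating $\beta$-sequence contracts a \emph{perpetual} redex (a $\beta$-redex whose erased arguments are closed and strongly normalising is perpetual). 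Perpetuality transfers divergence backwards, and the induction closes. Your two proposed ingredients each conceal the difficulty rather than resolve it: reducibility candidates for this encoding would have to be closed under the self-application inherent in the Scott-style representation, precisely the circularity that prevents these terms from being simply typed, so a well-founded index is needed anyway --- and the natural index is the $\eqs$-reduction length, at which point the candidates do no work; and the ``standardization step'' you invoke (every infinite $\beta$-reduction of $M^{\phi}$ projects to an infinite $\eqs$-reduction of $M$) is not an auxiliary lemma but is equivalent to the statement being proved, and you give no argument for it. The concrete missing idea is the perpetuality of the simulating steps, obtained from the strong normalisation and closedness of the erased arguments.
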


\section{Encoding the Factorisation Calculus}
\label{sec:Encoding}

In this section, we present our novel technical contribution: an encoding of SF-calculus in $\lambda$-calculus. We believe that this is the first such encoding presented in the literature. Our encoding is a faithful simulation; it preserves both the reduction behaviour of terms (thus also $\beta$-equality) and their termination behaviour (i.e.~strong normalisation). In this section we shall make use of standard notation and results for term rewriting systems, details of which may be found in \cite{Klop-TRS}.

The key step to the encoding is to define a rewrite system behaviourally equivalent to SF-calculus that is also canonical, in the sense of \Cref{def:CanonicalEquations}. It is then simply a matter of applying the construction of Berarducci and \Bohm to obtain the encoding. Thus it is our translation of SF-calculus into this intermediate rewrite system that is the primary novelty of our contribution.

We are aiming to derive a set of rewrite rules that is \emph{canonical} and so we must translate the schematic definition of Jay and Given-Wilson, as presented in \Cref{sec:Calculus}, into one consisting of algebraic rewrite rules. There are two salient features of \Cref{def:CanonicalEquations} that we must take into account: that the rewrite rules must make a distinction between \emph{programs} and \emph{constructors}; and that the left-hand side of each rewrite rule must contain exactly one program symbol and one constructor. To obtain rewrite rules of the required form, we recast SF-calculus as a \emph{curryfied}, \emph{applicative} term rewriting system. That is, we first introduce an explicit program symbol $\app$ to denote application and use the symbols $\comb{S}$ and $\comb{F}$ solely as \emph{constructors}. Secondly we stratify the combinators $\comb{C} \in \{\comb{S}, \comb{F}\}$ into sets $\{\comb{C}_0, \comb{C}_1, \comb{C}_2\}$ of constructors, each of which represent successive \emph{partial} applications of their underlying combinator $\comb{C}$. Although this `currying' process is well-known from the world of functional programming, the reader may refer to \cite{Kahrs95,Bakel-Fernandez-IaC'96} for a formal definition of this process in the context of general term rewriting.

We may take the rewrite rules for the $\comb{S}$ combinator directly from the standard curryfied applicative formulation of Combinatory Logic (see e.g.~\cite{Bakel-Fernandez-IaC'96}):
\begin{align*}
  \app\,(\comb{S_0},\,x) & \rightarrow \comb{S_1}\,(x)
  &
  \app\,(\comb{S_1}\,(x),\,y) &\rightarrow \comb{S_2}\,(x,\,y)
  &
  \app\,(\comb{S_2}\,(x,\,y),\,z) &\rightarrow \app\,(\app\,(x,\,z),\,\app\,(y,\,z))
\end{align*}
The rules for producing the partial applications of the $\comb{F}$ combinator are similarly straightforward:
\begin{align*}
  \app\,(\comb{F_0},\,x) & \rightarrow \comb{F_1}\,(x)
  &
  \app\,(\comb{F_1}\,(x),\,y) &\rightarrow \comb{F_2}\,(x,\,y)
\end{align*}
The rewrite rule for the full application of the $\comb{F}$ combinator is more tricky because we must find a way of implementing its two possible reductions. As in the original formulation of SF-calculus, since the choice of which reduction to make is determined by the structure of the first argument we should like to be able to use the pattern-matching capabilities inherent in the term rewriting discipline, e.g. by giving rewrite rules such as:
\begin{align*}
  \app\,(\comb{F_2}\,(\comb{S_0},\,y),\,z) &\rightarrow y
  &
  \app\,(\comb{F_2}\,(\comb{F_1}\,(x),\,y),\,z) &\rightarrow \app\,(\app\,(z,\,\comb{F_0}),\,x)
\end{align*}
However these rules are \emph{not} canonical since they contain two occurrences of a constructor: they are pattern-matching `too deeply'. We can circumvent this by introducing an auxiliary \emph{program} symbol $\freduce$ and then having the rewrite rule for the $\comb{F_2}$ case of $\app$ delegate to this new program:
\begin{equation*}
  \app\,(\comb{F_2}\,(x,\,y),\,z) \rightarrow \freduce\,(x,\,y,\,z)
\end{equation*}
Since $\freduce$ is an independent program symbol, and only needs to pattern match on its first argument to determine which result to compute, we may give canonical rewrite rules for it, such as the following:
\begin{align*}
  \freduce\,(\comb{S_0},\,y,\,z) &\rightarrow y
  &
  \freduce\,(\comb{F_1}\,(x),\,y,\,z) \rightarrow \app\,(\app\,(z,\,\comb{F_0}),\,x)
\end{align*}

We now have all the components to be able to present a canonical rewrite system that faithfully implements SF-calculus.

\begin{definition}[Currified Applicative SF-Calculus]
\label{def:CompleteTRS}
  Let $\sigSF = \sig_0 \cup \sig_1$ be the signature comprising the set $\sig_0 = \{ \comb{S_0}, \comb{S_1}, \comb{S_2}, \comb{F_0}, \comb{F_1}, \comb{F_2} \}$ of constructors and the set $\sig_1 = \{ \app, \freduce \}$ of programs. \emph{Curryfied Applicative SF-calculus} is the term rewriting system $\CurryAppSF$ defined by the rewrite rules given in \Cref{fig:ApplicativeCalculus} over the signature $\sigSF$. We denote its one-step and many-step reduction relations by $\casfreduction$ and $\casfreduction^{\ast}$.
\end{definition}
\noindent
Notice that the rewrite rules of $\CurryAppSF$ are canonical, in the sense of \Cref{def:CanonicalEquations}, and that they are also \emph{complete}. We also remark that $\CurryAppSF$ is an \emph{orthogonal} term rewriting system \cite[Def. 2.1.1]{Klop-TRS}.

\begin{figure}[t]
  \begin{gather*}
    \begin{aligned}
      &
      \app\,(\comb{S_0},\,x) \rightarrow \comb{S_1}\,(x)
      & \qquad &
      \app\,(\comb{F_0},\,x) \rightarrow \comb{F_1}\,(x)
      \\
      &
      \app\,(\comb{S_1}\,(x),\,y) \rightarrow \comb{S_2}\,(x,\,y)
      &&
      \app\,(\comb{F_1}\,(x),\,y) \rightarrow \comb{F_2}\,(x,\,y)
      \\
      &
      \app\,(\comb{S_2}\,(x,\,y),\,z) \rightarrow \app\,(\app\,(x,\,z),\,\app\,(y,\,z))
      &&
      \app\,(\comb{F_2}\,(x,\,y),\,z) \rightarrow \freduce\,(x,\,y,\,z)
    \end{aligned}
    \\[0.5em]
    \begin{aligned}
      \freduce\,(\comb{S_0},\,y,\,z) & \rightarrow y
      &\qquad 
      \freduce\,(\comb{S_1}\,(x),\,y,\,z) & \rightarrow \app\,(\app\,(z,\,\comb{S_0}),\,x)
      \\
      \freduce\,(\comb{F_0},\,y,\,z) & \rightarrow y
      &
      \freduce\,(\comb{F_1}\,(x),\,y,\,z) & \rightarrow \app\,(\app\,(z,\,\comb{F_0}),\,x)
    \end{aligned}
    \\[0.5em]
    \begin{aligned}
      \freduce\,(\comb{S_2}\,(p,\,q),\,y,\,z) & \rightarrow \app\,(\app\,(z,\,\app\,(\comb{S_0},\,p)),\,q)
      \\
      \freduce\,(\comb{F_2}\,(p,\,q),\,y,\,z) & \rightarrow \app\,(\app\,(z,\,\app\,(\comb{F_0},\,p)),\,q)
    \end{aligned}
  \end{gather*}
  \caption{A Complete Set of Canonical Rewrite Rules for Currified Applicative SF-calculus}
  \label{fig:ApplicativeCalculus}
\end{figure}

It is interesting to observe that our implementation of SF-calculus as a canonical applicative term rewriting system has involved an application of the \emph{Visitor} design pattern\footnote{In fact, the encoding that we are presenting in this paper arose as a direct result of considering how the Factorisation Calculus could be implemented in (Featherweight) Java.} \cite{GammaEtAl95}. When it comes to reducing a complete application of the $\comb{F}$ combinator, the computation must proceed based on the particular identity of some object (i.e.~the first argument), but \emph{without having any knowledge of that identity}. The solution is to apply the visitor pattern, which involves invoking a new `\textsf{visit}' operation (that we call $\freduce$) on the object, which in response executes the appropriate behaviour based on its \emph{self}-knowledge of its own identity. We do not think it is entirely coincidental that the visitor pattern has arisen in our work: its connection with structural matching has already been noted \cite{PalsbergJ98}, and investigating this connection further is an avenue for future research.

There is a straightforward translation from SF-calculus to $\CurryAppSF$.
\begin{definition}[Translation of SF-calculus to $\CurryAppSF$]
  The translation $\sem{\cdot}_{@}$ from SF-terms to $\CurryAppSF$-terms is defined by $\sem{\comb{S}}_{@} = \comb{S_0}$, $\sem{\comb{F}}_{@} = \comb{F_0}$, and $\sem{MN}_{@} = \app\,(\sem{M}_{@},\,\sem{N}_{@})$.
\end{definition}

We now show that $\CurryAppSF$ faithfully implements SF-calculus.

\begin{lemma}[$\sem{\cdot}_{@}$ Preserves Reduction]
\label{lem:Preservation:Reduction}
  Let $M$ and $N$ be SF-terms; if $M \sfreduction^{\ast} N$ then $\sem{M}_{@} \casfreduction^{\ast} \sem{N}_{@}$.
\end{lemma}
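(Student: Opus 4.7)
The proof goes by induction on the length of the reduction sequence $M \sfreduction^{\ast} N$. The zero-step case is immediate by reflexivity of $\casfreduction^{\ast}$, so the inductive step reduces to the single-step claim: whenever $M \sfreduction N$ in one step, we have $\sem{M}_{@} \casfreduction^{\ast} \sem{N}_{@}$. I would prove this single-step claim by induction on the derivation of $M \sfreduction N$. The context-closure cases (reduction under the left or right of an application) follow immediately, since $\sem{\cdot}_{@}$ is homomorphic over application and $\casfreduction$ is contextually closed.

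It then suffices to verify the three root reduction rules. For the $\comb{S}$-rule, $\sem{\comb{S}\,M\,N\,X}_{@} = \app\,(\app\,(\app\,(\comb{S_0}, \sem{M}_{@}), \sem{N}_{@}), \sem{X}_{@})$ reduces in three currying steps for the $\comb{S}$ combinator to $\app\,(\app\,(\sem{M}_{@}, \sem{X}_{@}), \app\,(\sem{N}_{@}, \sem{X}_{@}))$, which is literally $\sem{M\,X\,(N\,X)}_{@}$. For the $\comb{F}$-atomic rule $\comb{F}\,O\,M\,N \sfreduction M$ with $O \in \{\comb{S}, \comb{F}\}$, an analogous unfolding of the $\app$-$\comb{F}$ rules produces $\freduce\,(\sem{O}_{@}, \sem{M}_{@}, \sem{N}_{@})$, which reduces to $\sem{M}_{@}$ via the $\freduce$-rule for the nullary constructor $\comb{S_0}$ or $\comb{F_0}$.

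The only case requiring slight care is the $\comb{F}$-compound rule $\comb{F}\,(P\,Q)\,M\,N \sfreduction N\,P\,Q$ for $P\,Q$ a factorable form. The subtlety is that a factorable form such as $\comb{S}\,M'$ is a normal form in SF-calculus, but its translation $\sem{\comb{S}\,M'}_{@} = \app\,(\comb{S_0}, \sem{M'}_{@})$ is \emph{not} in constructor form in $\CurryAppSF$, so the $\freduce$-rules cannot fire on it as it stands. I would dispatch this by enumerating the four shapes a compound factorable form can take --- $\comb{S}\,M'$, $\comb{F}\,M'$, $\comb{S}\,M'\,N'$, $\comb{F}\,M'\,N'$ --- and observing that, in each case, $\sem{P\,Q}_{@}$ reduces by the $\app$-rules for $\comb{S}$ or $\comb{F}$ alone to a constructor term $\comb{C_k}(\ldots)$ with $k \in \{1,2\}$. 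Combined with the outer currying steps for $\comb{F}$, this exposes a $\freduce$-redex on the appropriate $\comb{C_k}$, and the right-hand side of the corresponding $\freduce$-rule is precisely $\sem{N\,P\,Q}_{@}$: the rule reconstructs the partial application as $\app\,(\comb{C_0}, p)$, exactly matching the shape produced by $\sem{\cdot}_{@}$ applied to $P$.

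The main obstacle is thus conceptual rather than computational: one must recognise that SF partial applications (which are SF-normal forms) translate to unreduced $\app$-terms in the currified system, and that the $\freduce$-rewrite rules require these to first be rewritten into constructor form before they can fire. Once this bookkeeping is seen to always go through, the proof reduces to a short finite case enumeration.
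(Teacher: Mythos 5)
Your proposal is correct and follows essentially the same route as the paper: reduce to the three root rules (context closure and transitivity being immediate from the homomorphic translation), then verify each by an explicit reduction sequence, with the key observation that the translation of a compound factorable form must first be driven into constructor form ($\comb{C_1}(\ldots)$ or $\comb{C_2}(\ldots,\ldots)$) by the $\app$-rules before the appropriate $\freduce$-rule can fire. The paper simply writes out these finitely many reduction sequences case by case, exactly as you describe.
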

\begin{proof}
  It is sufficient to consider the basic reduction rules of SF-calculus. In the interests of clarity, we underline the redex that is contracted at each step. The case for $\comb{S}$ is straightforward:
  \begin{align*}
    \sem{\comb{S}\,M\,N\,X}_{@} 
    & = \app\,(\app\,(\underline{\app\,(\comb{S_0},\,\sem{M}_{@})},\,\sem{N}_{@}),\,\sem{X}_{@}) 
      \casfreduction {\app\,(\underline{\app\,(\comb{S_1}\,(\sem{M}_{@}),\,\sem{N}_{@})},\,\sem{X}_{@})} \\
    & \casfreduction \underline{\app\,(\comb{S_2}\,(\sem{M}_{@},\,\sem{N}_{@}),\,\sem{X}_{@})}
      \casfreduction {\app\,(\app\,(\sem{M}_{@},\,\sem{X}_{@}),\,\app\,(\sem{N}_{@},\,\sem{X}_{@}))} \\
    & = \sem{M\,X\,(N\,X)}_{@}
  \end{align*}
  The case for $\comb{F}$ with $\comb{S}$ the first argument (i.e.~atomic) is as follows (the other atomic case is symmetric):
  \begin{multline*}
    \sem{\comb{F}\,\comb{S}\,M\,N}_{@} 
    = {\app\,(\app\,(\underline{\app\,(\comb{F_0},\,\comb{S_0})},\,\sem{M}_{@}),\,\sem{N}_{@})} 
    \casfreduction {\app\,(\underline{\app\,(\comb{F_1}\,(\comb{S_0}),\,\sem{M}_{@})},\,\sem{N}_{@})} \\
    \casfreduction \underline{\app\,(\comb{F_2}\,(\comb{S_0},\,\sem{M}_{@}),\,\sem{N}_{@})} 
    \casfreduction \underline{\freduce\,(\comb{S_0},\,\sem{M}_{@},\,\sem{N}_{@})}
    \casfreduction \sem{M}_{@}
  \end{multline*}
  When the first argument to $\comb{F}$ is a factorable form, we must further consider its structure. The case for when the first argument is $\comb{S}\,X$ (for some term $X$) is as follows:
  \begin{align*}
    \mathrlap{\sem{\comb{F}\,(\comb{S}\,X)\,M\,N}_{@} 
    = {\app\,(\app\,(\app\,(\comb{F_0},\,\underline{\app\,(\comb{S_0},\,\sem{X}_{@})}),\,\sem{M}_{@}),\,\sem{N}_{@})}} \quad & \\
    &\casfreduction {\app\,(\app\,(\underline{\app\,(\comb{F_0},\,\comb{S_1}\,(\sem{X}_{@}))},\,\sem{M}_{@}),\,\sem{N}_{@})} \\
    &\casfreduction {\app\,(\underline{\app\,(\comb{F_1}\,(\comb{S_1}\,(\sem{X}_{@})),\,\sem{M}_{@})},\,\sem{N}_{@})} 
    \casfreduction \underline{\app\,(\comb{F_2}\,(\comb{S_1}\,(\sem{X}_{@}),\,\sem{M}_{@}),\,\sem{N}_{@})} \\
    &\casfreduction \underline{\freduce\,(\comb{S_1}\,(\sem{X}_{@}),\,\sem{M}_{@},\,\sem{N}_{@})}
    \casfreduction {\app\,(\app\,(\sem{N}_{@},\,\comb{S_0}),\,\sem{X}_{@})} 
    = \sem{N\,\comb{S}\,X}_{@}
  \end{align*}
  Again, the case for when the first argument is $\comb{F}\,X$ (for some term $X$) is symmetric and can be obtained from the above sequence by replacing each occurrence of $\comb{S_0}$ by $\comb{F_0}$ and each occurrence of $\comb{S_1}$ by $\comb{F_1}$. 
  
  The case for when the first argument is $\comb{S}\,X\,Y$ (for some terms $X$ and $Y$) is as follows:
  \begin{align*}
    \mathrlap{\sem{\comb{F}\,(\comb{S}\,X\,Y)\,M\,N}_{@} 
    = {\app\,(\app\,(\app\,(\comb{F_0},\,\app\,(\underline{\app\,(\comb{S_0},\,\sem{X}_{@})},\,\sem{Y}_{@})),\,\sem{M}_{@}),\,\sem{N}_{@})}} \quad & \\
    &\casfreduction {\app\,(\app\,(\app\,(\comb{F_0},\,\underline{\app\,(\comb{S_1}\,(\sem{X}_{@}),\,\sem{Y}_{@})}),\,\sem{M}_{@}),\,\sem{N}_{@})} \quad & \\
    &\casfreduction {\app\,(\app\,(\underline{\app\,(\comb{F_0},\,\comb{S_2}\,(\sem{X}_{@},\,\sem{Y}_{@}))},\,\sem{M}_{@}),\,\sem{N}_{@})} \\
    &\casfreduction {\app\,(\underline{\app\,(\comb{F_1}\,(\comb{S_2}\,(\sem{X}_{@},\,\sem{Y}_{@})),\,\sem{M}_{@})},\,\sem{N}_{@})} \\
    &\casfreduction \underline{\app\,(\comb{F_2}\,(\comb{S_2}\,(\sem{X}_{@},\,\sem{Y}_{@}),\,\sem{M}_{@}),\,\sem{N}_{@})} \\
    &\casfreduction \underline{\freduce\,(\comb{S_2}\,(\sem{X}_{@},\,\sem{Y}_{@}),\,\sem{M}_{@},\,\sem{N}_{@})} \\
    &\casfreduction {\app\,(\app\,(\sem{N}_{@},\,\app\,(\comb{S_0},\,\sem{X}_{@})),\,\sem{Y}_{@})} 
    = \sem{N\,(\comb{S}\,X)\,Y}_{@}
  \end{align*}
  Once more, the case for when the first argument is $\comb{F}\,X\,Y$ (for some terms $X$ and $Y$) is symmetric and can be obtained from the above sequence by replacing each occurrence of $\comb{S_0}$ by $\comb{F_0}$, each occurrence of $\comb{S_1}$ by $\comb{F_1}$, and each occurrence of $\comb{S_2}$ by $\comb{F_2}$. 
\end{proof}

To show that $\sem{\cdot}_{@}$ preserves strong normalisation, we will rely on the notion of a \emph{perpetual reduction sequence}. We recall the relevant definitions of perpetual reductions and their properties \cite{KhasidashviliOO01}. A term $t$ is called an \emph{$\infty$-term} (also denoted $\infty(t)$) if it has an infinite reduction sequence. A reduction \emph{step} $t \rightarrow s$ is called \emph{perpetual} if $\infty(t)$ implies $\infty(s)$, that is it preserves divergence, and a reduction sequence $t_1 \rightarrow \ldots \rightarrow t_n$ is perpetual if every step $t_i \rightarrow t_{i+1}$ is perpetual. Clearly, a perpetual reduction sequence $t \rightarrow^{\ast} t'$ also preserves divergence. 
A \emph{redex} $u$ is called perpetual if its contraction in every context yields a perpetual reduction step. Let $u \rightarrow t$ be a substitution instance of a rewrite rule $r$ (so $u$ is a redex and $t$ its $r$-contraction), then call the subterms of $u$ that are those substituted for the variables in $r$ the \emph{arguments} of $u$. Such an argument is said to be \emph{erased} if it corresponds to a variable that does \emph{not} occur in the right-hand side of $r$. It is the case that for orthogonal rewrite systems every redex whose erased arguments are strongly normalising and closed (i.e.~containing no variables) is perpetual \cite[Cor.~5.1]{KhasidashviliOO01}.

We first prove a couple of auxiliary lemmas.
\begin{lemma}
\label{lem:SF-NFImagesSN}
  Let $N$ be an SF-normal form, then $\sem{N}_{@}$ is strongly normalising.
\end{lemma}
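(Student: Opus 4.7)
My plan is to prove this by structural induction on the SF-normal form $N$. By inspection of the SF-reduction rules, any normal form must have the shape $\comb{C}$, $\comb{C}\,M$, or $\comb{C}\,M_1\,M_2$ with $\comb{C} \in \{\comb{S}, \comb{F}\}$ and each $M_i$ itself a normal form; a third argument at the head would admit a reduction, contradicting normality.

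The atomic cases $N \in \{\comb{S}, \comb{F}\}$ are immediate: $\sem{N}_{@}$ is the bare constructor $\comb{S_0}$ or $\comb{F_0}$, which contains no redex at all.

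For the inductive step, consider $N = \comb{C}\,M_1\,M_2$ (the one-argument case is strictly simpler). Then $\sem{N}_{@} = \app(\app(\comb{C_0},\,\sem{M_1}_{@}),\,\sem{M_2}_{@})$, and by the induction hypothesis both $\sem{M_i}_{@}$ are strongly normalising. Suppose for contradiction that there is an infinite $\casfreduction$-sequence starting from $\sem{N}_{@}$. Each step is either internal to one of the two argument positions, or contracts a spine $\app$-redex: first $\app(\comb{C_0},\,\cdot) \casfreduction \comb{C_1}(\cdot)$, and thereafter $\app(\comb{C_1}(\cdot),\,\cdot) \casfreduction \comb{C_2}(\cdot,\,\cdot)$. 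Each of these two spine redexes can fire at most once along any reduction path, because once both have fired the term takes the shape $\comb{C_2}(t_1,\,t_2)$, in which the head is the inert constructor $\comb{C_2}$ and no spine redex remains. Hence after at most two spine contractions the remainder of the supposed infinite sequence consists entirely of internal steps to the two argument positions; by pigeonhole at least one position sees infinitely many steps, yielding an infinite reduction of $\sem{M_1}_{@}$ or $\sem{M_2}_{@}$ (or one of their reducts, which are still SN as reducts of SN terms) --- contradicting the induction hypothesis.

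The one subtlety to verify carefully is that internal reductions of $\sem{M_i}_{@}$ cannot accidentally expose an \emph{unexpected} spine redex, in particular an $\app(\comb{C_2}(\cdot,\,\cdot),\,\cdot)$ pattern that would fire one of the $\freduce$ rules. This cannot happen, since the spine of $\sem{N}_{@}$ is formed solely by the two outer $\app$'s applied to a head whose shape evolves only through $\comb{C_0} \rightsquigarrow \comb{C_1}(\cdot) \rightsquigarrow \comb{C_2}(\cdot,\,\cdot)$, and the moment the head becomes $\comb{C_2}(\cdot,\,\cdot)$ there is no outer $\app$ remaining on the spine to form an $\freduce$-redex. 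With this observation the structural induction closes cleanly and gives the claim.
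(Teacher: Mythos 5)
Your proof is correct, and it follows the same outer structure as the paper's (induction over the characterisation of closed SF-normal forms as $\comb{C}$, $\comb{C}\,X$, $\comb{C}\,X\,Y$ with normal subterms), but the inductive step is discharged differently. The paper first observes that the constructor-headed term $\comb{C_2}\,(\sem{X}_{@},\,\sem{Y}_{@})$ is strongly normalising, and then argues that the spine steps taking $\sem{N}_{@}$ to it form a \emph{perpetual} reduction sequence --- appealing to the cited result that in an orthogonal rewrite system a redex with no erased (strongly normalising, closed) arguments is perpetual --- so divergence of $\sem{N}_{@}$ would force divergence of the head normal form, a contradiction. You instead argue directly: you classify every step of a hypothetical infinite reduction from $\sem{N}_{@}$ as either one of the at most two spine contractions or a step internal to an argument position, and conclude by pigeonhole. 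These are two faces of the same observation (the spine redexes erase nothing), but your version is self-contained and more elementary, at the cost of having to check by hand that argument-internal reductions cannot create or destroy spine redexes --- which you do, correctly, by noting that the head symbol each spine $\app$ pattern-matches on is produced by the spine itself ($\comb{C_0}$, then $\comb{C_1}$, then the inert $\comb{C_2}$) and never by an argument, and that normality of $N$ guarantees there is no third $\app$ on the spine to form an $\freduce$-redex. The paper's version is shorter because the perpetuality lemma absorbs exactly this bookkeeping, and because the same machinery is then reused in the subsequent lemma for genuine redexes, where an argument really is erased and the orthogonality and closedness side conditions start to carry weight; your direct argument would not extend as cheaply to that case.
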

\begin{proof}
  We characterise the normal forms as terms taking one of the following forms: $\comb{S}$, $\comb{F}$, $\comb{S}\,X$, $\comb{F}\,X$, $\comb{S}\,X\,Y$ or $\comb{F}\,X\,Y$, in which each subterm is also a normal form. We then proceed by induction on the size of terms. The base cases, i.e.~when $N$ is either $\comb{S}$ of $\comb{F}$ are trivial since then $\sem{N}_{@}$ is itself a normal form. For the inductive cases, notice that the terms $\comb{S_1}\,(\sem{X}_{@})$, $\comb{F_1}\,(\sem{X}_{@})$, $\comb{S_2}\,(\sem{X}_{@},\,\sem{Y}_{@})$ and $\comb{F_2}\,(\sem{X}_{@},\,\sem{Y}_{@})$ are strongly normalising since they are head normal and by induction $\sem{X}_{@}$ and $\sem{Y}_{@}$ are strongly normalising as $X$ and $Y$ are by definition normal forms (smaller than $N$). It is then straightforward to show in each case that the (unique) reduction from $\sem{N}_{@}$ to its corresponding head normal form given above is perpetual since it does not erase any arguments, and $\CurryAppSF$ is an orthogonal rewrite system. The result then follows.
\end{proof}

\begin{lemma}
\label{lem:PerpetualReductions}
  Let $X$, $M$ and $N$ be SF-normal forms, and $O$ an operator (i.e.~either $\comb{S}$ or $\comb{F}$) such that $O\,X\,M\,N \sfreduction R$; then $\mathcal{C}[\sem{O\,X\,M\,N}_{@}] \casfreduction^{\ast} \mathcal{C}[\sem{R}_{@}]$ is a perpetual reduction sequence for any $\CurryAppSF$-term context $\mathcal{C}$.
\end{lemma}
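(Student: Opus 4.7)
The plan is to exploit the fact that a perpetual redex, by definition, yields a perpetual reduction step \emph{in every context}: it therefore suffices to go through the simulation reduction $\sem{O\,X\,M\,N}_{@} \casfreduction^{\ast} \sem{R}_{@}$ (as enumerated in the proof of \Cref{lem:Preservation:Reduction}) and verify that at each step the redex contracted is perpetual. Once this is established, the conclusion for the sequence sitting inside an arbitrary $\CurryAppSF$-term context $\mathcal{C}$ is automatic.

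The tool for verifying perpetuality is the criterion recalled just before \Cref{lem:SF-NFImagesSN}, from \cite{KhasidashviliOO01}: in an orthogonal rewrite system, any redex whose erased arguments are closed and strongly normalising is perpetual. Since $\CurryAppSF$ is orthogonal, only the erased arguments of each redex need examination. Walking through the cases mirrored from \Cref{lem:Preservation:Reduction}: for $O = \comb{S}$, the three rewrite rules applied are the $\app\,(\comb{S_i},\,\cdot)$ rules, none of which drops a variable from left- to right-hand side, so every step is trivially perpetual. For $O = \comb{F}$ with $X$ atomic, only the final step $\freduce\,(\comb{S_0},\,\sem{M}_{@},\,\sem{N}_{@}) \to \sem{M}_{@}$ (or its $\comb{F_0}$ analogue) erases anything, namely $\sem{N}_{@}$. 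For $O = \comb{F}$ with $X$ a non-atomic factorable form (of the shape $\comb{S}\,X'$, $\comb{F}\,X'$, $\comb{S}\,X'\,Y'$ or $\comb{F}\,X'\,Y'$), the only erasing step is the terminating $\freduce$-contraction, which drops $\sem{M}_{@}$. The intermediate rewrite steps that build up the $\comb{F_2}$-headed redex erase nothing.

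In every erasing step, the argument that is discarded is $\sem{T}_{@}$ for some $T \in \{M,N\}$, and $T$ is an SF-normal form by hypothesis. Hence $\sem{T}_{@}$ is strongly normalising by \Cref{lem:SF-NFImagesSN}, and it is closed because the grammar of SF-terms contains no variables and hence neither does its $\sem{\cdot}_{@}$-translation. The perpetuality criterion is therefore satisfied at each step, so each step is perpetual in any context $\mathcal{C}$, and the entire reduction sequence is perpetual as required. The main obstacle is simply the case bookkeeping for identifying erased arguments; conceptually the crux is to notice that the normal-form hypothesis on $M$ and $N$ is precisely what licenses the appeal to \Cref{lem:SF-NFImagesSN}, and without it the discarded translations could diverge and perpetuality would fail.
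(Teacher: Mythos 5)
Your proposal is correct and follows essentially the same route as the paper's own proof: both identify the witnessing reduction sequences from \Cref{lem:Preservation:Reduction}, observe that the only erased argument (when one exists) occurs in the final step and is $\sem{M}_{@}$ or $\sem{N}_{@}$, and then invoke \Cref{lem:SF-NFImagesSN} together with closedness and the orthogonality of $\CurryAppSF$ to apply the perpetuality criterion of \cite{KhasidashviliOO01}. Your case-by-case identification of the erased arguments is slightly more explicit than the paper's, but the argument is the same.
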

\begin{proof}
  We show that there is a reduction sequence $\sem{O\,X\,M\,N}_{@} \casfreduction^{\ast} \sem{R}_{@}$ which contracts a perpetual redex at each step, from which the result immediately follows. In fact, the reduction sequences that witness this are exactly those that are used to show preservation of reduction. In each case notice that, in the reduction sequence demonstrated in the proof of \Cref{lem:Preservation:Reduction}, the only erased argument (when it exists) is in the final reduction step and in each case this argument is either $\sem{M}_{@}$ or $\sem{N}_{@}$ which by \Cref{lem:SF-NFImagesSN} is strongly normalising since $M$ and $N$ are normal forms (and also closed since we do not consider SF-terms with variables). Thus, since $\CurryAppSF$ is an orthogonal rewrite system, it follows that the redex contracted at each step is perpetual.
\end{proof}

We can now prove the following result.

\begin{lemma}[$\sem{\cdot}_{@}$ Preserves Strong Normalisation]
\label{lem:Preservation:StrongNormalisation}
  Let $M$ be a strongly normalising SF-term; then $\sem{M}_{@}$ is strongly normalising.
\end{lemma}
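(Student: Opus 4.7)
The plan is to induct on the maximum length $\ell(M)$ of a reduction sequence from $M$; this is a well-defined natural number precisely because $M$ is strongly normalising. The base case $\ell(M) = 0$, in which $M$ is itself an SF-normal form, is dispatched immediately by \Cref{lem:SF-NFImagesSN}.

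For the inductive step with $\ell(M) > 0$, I would first pick an \emph{innermost} SF-redex in $M$. Such a redex must exist because $M$ is not normal, and being innermost ensures that the redex has the shape $O\,X\,P\,N$ where $O$ is an atomic combinator (i.e.\ $\comb{S}$ or $\comb{F}$) and $X$, $P$, $N$ are themselves SF-normal forms --- exactly the shape required by \Cref{lem:PerpetualReductions}. Writing $M = \mathcal{D}[O\,X\,P\,N]$ and letting $M' = \mathcal{D}[R]$ where $O\,X\,P\,N \sfreduction R$, we have $M \sfreduction M'$ and hence $\ell(M') < \ell(M)$, so the inductive hypothesis delivers that $\sem{M'}_{@}$ is strongly normalising. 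A routine structural observation shows that $\sem{\cdot}_{@}$ commutes with contexts, so there is a $\CurryAppSF$-context $\mathcal{C}$ with $\sem{M}_{@} = \mathcal{C}[\sem{O\,X\,P\,N}_{@}]$ and $\sem{M'}_{@} = \mathcal{C}[\sem{R}_{@}]$.

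Finally, I would invoke \Cref{lem:PerpetualReductions}, which furnishes a \emph{perpetual} reduction sequence $\mathcal{C}[\sem{O\,X\,P\,N}_{@}] \casfreduction^{\ast} \mathcal{C}[\sem{R}_{@}]$ in $\CurryAppSF$. Since perpetuality of a step means that divergence of its source implies divergence of its target, a straightforward contrapositive argument applied along the sequence propagates strong normalisation backwards from the endpoint $\sem{M'}_{@}$ to $\sem{M}_{@}$, closing the induction.

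The main obstacle is organisational rather than conceptual: the supporting lemmas do the heavy lifting, but one has to set up the induction so that \Cref{lem:PerpetualReductions} applies at every inductive step. The critical choice in this respect is insisting on an \emph{innermost} SF-redex, since it is exactly this that guarantees that the arguments $X$, $P$, $N$ are SF-normal forms --- a hypothesis on which the perpetuality of the simulating $\CurryAppSF$-reduction ultimately hinges via \Cref{lem:SF-NFImagesSN}.
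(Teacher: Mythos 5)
Your proposal is correct and follows essentially the same route as the paper's proof: induction on the length of the longest reduction sequence from $M$, contracting an \emph{innermost} redex so that its arguments are SF-normal forms, and then combining \Cref{lem:SF-NFImagesSN} with the perpetual simulating sequence from \Cref{lem:PerpetualReductions} to propagate strong normalisation backwards. No gaps.
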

\begin{proof}
  We use the same technique as used in the proof of \cite[Thm.~3.4(2)]{BerarducciB92}, and proceed by (strong) induction on the length $n$ of the longest reduction sequence from $M$ to its normal form.
  When $n = 0$, then we have that $M$ is a (SF-)normal form and thus $\sem{M}_{@}$ is strongly normalising w.r.t $\casfreduction^{\ast}$ by \Cref{lem:SF-NFImagesSN}.
  When $n > 0$ then $M$ must contain at least one redex $O\,X\,Y\,Z$. Consider an \emph{innermost} redex, whose contractum is the term $R$. Thus $M = \mathcal{C}[O\,X\,Y\,Z] \sfreduction \mathcal{C}[R] = N$ (for some (SF-)term context $\mathcal{C}$) and $X$, $Y$ and $Z$ are normal forms (since the redex is innermost).
  Now, since $M$ is strongly normalising so too is $N$, and the length of its longest reduction sequence must be strictly less than $n$ (otherwise $n$ would not be maximum). Thus by the inductive hypothesis $\sem{N}_{@}$ is strongly normalising.
  Consider now the structure of $\sem{M}_{@}$: we have $\sem{M}_{@} = \mathcal{C'}[\sem{O\,X\,Y\,Z}_{@}]$ for some ($\CurryAppSF$-)term context $\mathcal{C'}$.
  Since $X$, $Y$ and $Z$ are normal forms, by \Cref{lem:PerpetualReductions} there is a perpetual reduction sequence from $\sem{M}_{@} = \mathcal{C'}[\sem{O\,X\,Y\,Z}_{@}]$ to $\sem{N}_{@} = \mathcal{C'}[\sem{R}_{@}]$, i.e.~one which preserves divergence. Therefore, since $\sem{N}_{@}$ is strongly normalising so too is $\sem{M}_{@}$ (if it were not, neither would $\sem{N}_{@}$ be).
\end{proof}

Using Berarducci and {\Bohm}'s construction, outlined in \Cref{sec:NormalSolutions}, we obtain an encoding of SF-calculus in the $\lambda$-calculus.
\begin{definition}[Translation of SF-calculus to $\lambda$-calculus]
\label{def:LCEncoding}
  Fix a canonical representation $\phi_{\textsf{SF}}$ of $\sigSF$ w.r.t.~the rewrite rules of $\CurryAppSF$.  The mapping $\sem{\cdot}_{\lambda} = (\cdot)^{\phi_{\textsf{SF}}} \circ \sem{\cdot}_{@}$ translates SF-calculus to $\lambda$-calculus.
\end{definition}
\noindent
We leave it as an exercise to the reader to compute such a canonical representation $\phi_{\textsf{SF}}$.

We now present our main result: that $\sem{\cdot}_{\lambda}$ is a faithful encoding of SF-calculus in $\lambda$-calculus.
\begin{theorem}[Faithful Encoding of SF-calculus in $\lambda$-calculus]
\label{thm:SF2LC}
  The translation $\sem{\cdot}_{\lambda}$ of SF-calculus into $\lambda$-calculus preserves reduction and strong normalisation.
\end{theorem}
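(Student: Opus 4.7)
The plan is to exploit the decomposition $\sem{\cdot}_\lambda = (\cdot)^{\phi_{\textsf{SF}}} \circ \sem{\cdot}_@$ and chain together two facts we already have: (i) the translation $\sem{\cdot}_@$ into $\CurryAppSF$ preserves reduction and strong normalisation (\Cref{lem:Preservation:Reduction,lem:Preservation:StrongNormalisation}), and (ii) since $\phi_{\textsf{SF}}$ is a canonical representation for the canonical, complete rewrite system $\CurryAppSF$ (a fact we noted immediately after \Cref{def:CompleteTRS}), the Interpretation Theorem (\Cref{thm:Interpretation}) applies and gives us the corresponding preservation properties for $(\cdot)^{\phi_{\textsf{SF}}}$. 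Thus the theorem will follow by composition, with essentially no new combinatorial work.

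For the reduction clause, suppose $M \sfreduction^{\ast} N$. By \Cref{lem:Preservation:Reduction} we have $\sem{M}_@ \casfreduction^{\ast} \sem{N}_@$. I would then argue by induction on the length of this reduction sequence that $(\sem{M}_@)^{\phi_{\textsf{SF}}} \rightarrow_{\beta}^{\ast} (\sem{N}_@)^{\phi_{\textsf{SF}}}$. The single-step case reduces, using that $(\cdot)^{\phi_{\textsf{SF}}}$ is defined compositionally and commutes with substitution, to the statement that for each rewrite rule $\ell \rightarrow r$ of $\CurryAppSF$ one has $\ell^{\phi_{\textsf{SF}}} \rightarrow_{\beta}^{\ast} r^{\phi_{\textsf{SF}}}$, which is exactly the ``interpretation'' part of \Cref{thm:Interpretation}. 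Unfolding the definition of $\sem{\cdot}_\lambda$ gives $\sem{M}_\lambda \rightarrow_{\beta}^{\ast} \sem{N}_\lambda$ as required.

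For the strong normalisation clause, let $M$ be a strongly normalising SF-term. By \Cref{lem:Preservation:StrongNormalisation}, $\sem{M}_@$ is strongly normalising with respect to $\casfreduction$. Moreover, $\sem{M}_@$ is a closed $\CurryAppSF$-term: the grammar of SF-calculus contains no variables, so $\sem{\cdot}_@$ produces ground terms over $\sigSF$. Consequently, the second half of \Cref{thm:Interpretation} applies and yields that $(\sem{M}_@)^{\phi_{\textsf{SF}}} = \sem{M}_\lambda$ is strongly normalising with respect to $\rightarrow_\beta$.

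I do not anticipate a genuine obstacle here: all the real work was discharged in establishing \Cref{lem:Preservation:Reduction,lem:PerpetualReductions,lem:Preservation:StrongNormalisation} and in citing the Berarducci--\Bohm construction. The only point that deserves a brief explicit remark in the write-up is the closedness of $\sem{M}_@$, since \Cref{thm:Interpretation} only guarantees preservation of strong normalisation for closed terms; once this is observed, the theorem is immediate by composing the two layers.
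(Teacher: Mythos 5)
Your proposal is correct and follows essentially the same route as the paper: decompose $\sem{\cdot}_{\lambda}$ into $\sem{\cdot}_{@}$ followed by $(\cdot)^{\phi_{\textsf{SF}}}$, invoke \Cref{lem:Preservation:Reduction,lem:Preservation:StrongNormalisation} for the first stage and \Cref{thm:Interpretation} for the second, noting that all terms involved are closed. The extra detail you supply (induction on the length of the $\casfreduction$-sequence and compatibility of $(\cdot)^{\phi_{\textsf{SF}}}$ with contexts and substitution) is a harmless elaboration of what the paper leaves implicit.
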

\begin{proof}
  The result follows directly from the fact that the two translations that are composed to obtain $\sem{\cdot}_{\lambda}$, namely $\sem{\cdot}_{@}$ and $(\cdot)^{\phi_{\textsf{SF}}}$, each satisfy both these properties. In the case of the former, we refer to \Cref{lem:Preservation:Reduction,lem:Preservation:StrongNormalisation}; for the latter to the results of Berarducci and \Bohm, cf. \Cref{thm:Interpretation} (note that all terms are closed, since we do not consider SF-calculus with variables).
\end{proof}

\section{Expressiveness of Factorisation: Discussion \& Related Work}
\label{sec:Discussion}

We now turn out attention to the question of the expressiveness of SF-calculus relative to $\lambda$-calculus and Combinatory Logic. On one hand our results, i.e.~\Cref{thm:SF2LC}, along with those of Jay and Given-Wilson \cite{Given-WilsonJ11}, show that SF-calculus and $\lambda$-calculus simulate the same executions. On the other, SF-calculus is structure complete whereas $\lambda$-calculus is not. Is this a contradiction and, if so, how may it be resolved?
Notwithstanding the long tradition of using simulations to characterise computational equivalence, it has since been realised that a refinement of this notion is necessary to draw richer, more meaningful comparisons. 
While a complete and in-depth analysis is not within the scope of the current paper, by discussing our results with reference to some of this work we aim to draw some concrete conclusions about the how the expressiveness of structure completeness and SF-calculus may be characterised.

\paragraph{The `Standard' Notion of Equivalence.}
From the earliest research into computability, two aspects of abstract notions of computation were identified as relevant to the idea of expressiveness. Firstly, one wants to compare the respective set of \emph{functions} that each model computes. For example, in recursion theory it was shown that the set of primitive recursive functions (on natural numbers) is a strict subset of the recursive functions (see e.g.~\cite{Tourlakis84}). At the same time, there is a requirement to compare models that operate, at a fundamental level, in diverse domains. Turing Machines, $\lambda$-calculus and Combinatory Logic are, operationally, quite different ways to compute. It was shown however that via particular (and now canonical) representations of numbers, each model can `compute' the same set of functions on natural numbers, namely the partial recursive functions. Conversely, G\"{o}delization allows each of the computations in these models to be \emph{simulated} by a partial recursive function \cite{Kleene36}. 

This idea of simulation extends to the operation of the models themselves: each model may simulate the operational behaviour of the others. Such simulations also appear to abstract away the problem of \emph{representation}: often we do want to compute functions of natural numbers, however more often we desire to compute functions over different domains; it is incumbent upon us to represent elements of the desired domain of discourse as terms of the computational model. The initial characterisation of the set of `computable' functions was over the domain of natural numbers\footnote{Turing's work is different in this respect, since he deliberately embarked on a characterisation of computable functions over a different domain, namely that of strings of arbitrary symbols.}, but what is the set of `computable' functions over some other given domain? With simulations between models it seems that one may at least lay this question aside by observing that whatever can be represented in one model may then also be represented in the other, and therefore whatever functions they do compute it is the same in both cases. A stronger conclusion would be that the set of computable functions over arbitrary domains is isomorphic to the computable functions on natural numbers.

This simulation method has long since become the standard: to show two models are of equivalent computational power, demonstrate simulations of each in the other. One model of computation is only more powerful than another, then, if it is \emph{not} possible to simulate the former in the latter. The approach has been cemented over the years, notably in Landin's now seminal work \cite{Landin66}. 
As the search space of formal computation has been explored the notion of simulation has been adapted accordingly, and there are now a number of sophisticated simulations between all sorts of models, both sequential and concurrent (e.g.~\cite{AbadiCardelli96,SangiorgiW01}). In this tradition, \Cref{thm:SF2LC} is a result showing that $\lambda$-calculus is computationally \emph{as powerful as} SF-calculus.

\paragraph{Refining the Notion of Expressiveness.}
It was already noted over two decades ago that despite the broad applicability and application of the simulation method, it is not actually a fine-grained enough notion to provide a complete and universal characterisation of expressiveness. Felleisen observed that since the languages we wish to compare are (usually) Turing-complete, other methods (than simulation) must be found in order to verify claims of relative (in-)expressiveness \cite{Felleisen91}. He proposed a framework based on the concept, from logic, of \emph{eliminability} of symbols from conservative extensions \cite{Kleene52}. One logical system $\mathcal{L}$ is a conservative extension of another system $\mathcal{L}'$ if the expressions (formulae) and theorems of the latter are subsets of those of the former. A symbol of $\mathcal{L}$ (which is not in $\mathcal{L}'$) is eliminable if there is a homomorphism (i.e.~a map preserving the syntactic structure) $\varphi : \mathsf{Exp}(\mathcal{L}) \rightarrow \mathsf{Exp}(\mathcal{L}')$ from the expressions of $\mathcal{L}$ to the expressions of $\mathcal{L}'$, which acts as identity for expressions of $\mathcal{L}'$, such that an expression $t$ is a theorem of $\mathcal{L}$ if and only if $\varphi(t)$ is a theorem of $\mathcal{L}'$. Felleisen extends this to programming languages by analogy - formulae are (syntactically valid) programs and the theorems are the terminating programs. Then we may say that language $\mathcal{L}$ is \emph{more} expressive than language $\mathcal{L}'$ when the former adds some \emph{non-eliminable} syntactic construct, i.e.~one which cannot be `translated away'. Thus the standard simulation approach is refined by imposing an extra criterion when one language is a superset of another: can the larger one be built from the smaller one using \emph{macros}?

To place SF-calculus in this framework we can consider SKF-calculus, i.e.~the extension of SF-calculus by including an additional atomic term: the familiar $\comb{K}$ combinator. Since SKF-calculus is a proper extension of Combinatory Logic, obtained by adding the $\comb{F}$ combinator, we can apply the expressiveness test of Felleisen. That is, we ask is $\comb{F}$ eliminable? The answer to this question is \emph{no}; indeed this is guaranteed by \Cref{thm:NoFactorisationInCL}. In this sense, Jay and Given-Wilson's results \emph{do} justify the claim that SF-calculus is more expressive than Combinatory Logic and $\lambda$-calculus.

\paragraph{More Abstract Notions of Computational Equivalence.}
The simulation method, and its refinement described above, are still firmly grounded in an \emph{operational} view of computation, but recent work has sought to anchor formal comparisons of expressiveness in a more general, \emph{abstract} basis. A notable contribution to this effort is the work of Boker and Dershowitz \cite{BokerD09}. They abstract the notion of computational model as simply its \emph{extension}, i.e.~the set of functions over its inherent domain that it computes, and consider simulations (encodings) between them. They derive the remarkable result that combining the standard simulation approach with the natural containment of one extensionality within another leads to a paradox: some computational models can simulate models which are \emph{strictly} more powerful in the sense that they have larger extensionalities (i.e.~compute more functions). Thus, some representations \emph{add more computational power}. 

This result begs the question: do we consider simulations via such `active' mappings to constitute an equivalence? One may suspect that the problem lies in allowing \emph{injective} mappings between domains, and that imposing stricter conditions (e.g.~bijections) would ensure `passiveness'. This is indeed the case, but adopting such restrictions is useless for most comparisons since the passive encodings are ones which are ``almost identity''. We have no choice but to allow such encodings, although we do have the option of considering a hierarchy of equivalences based on the properties of the encodings used. Boker and Dershowitz define four increasingly stricter notions of (in-)equivalence based on, respectively: injective encodings (power equivalence), corresponding to the standard approach; injective encodings for which the images are computable in the simulating models (decent power equivalence); bijective encodings (bijective power equivalence); and bijections that are inverses of each other (isomorphism). Boker and Dershowitz also show that there are models (including Turing Machines and the recursive numeric functions) which cannot simulate any stronger models; they are \emph{interpretation complete}.

We may also gain insight into the relative expressiveness of SF-calculus and $\lambda$-calculus using this framework. Our result shows that they are \emph{power equivalent}. \Cref{thm:NoFactorisationInCL} shows that they cannot be bijectively power equivalent (nor, therefore, isomorphic) since that would imply that $\lambda$-calculus could distinguish arbitrary normal forms. Thus, in this sense they are not equivalent. We do not know if SF-calculus is \emph{bijectively stronger} than $\lambda$-calculus; this would require demonstrating a bijective encoding of the latter in the former. Also, we do not know if $\lambda$-calculus is decently power equivalent to SF-calculus; we consider it at least a possibility. We would also expect that, due to its intensional capabilities, SF-calculus is interpretation complete.

Related to the work of Boker and Dershowitz, is that of Cockett and Hofstra \cite{CockettH10}, and Longley \cite{Longley14} which are both concerned with category-theoretic descriptions of abstract computational models. In these frameworks model equivalence is interpreted by categorical isomorphism, and so akin to the strongest notion of equivalence considered by Boker and Dershowitz.

\section{Conclusions \& Future Work}
\label{sec:Conclusions}

In this paper, we have considered the relationship of the recently introduced SF-calculus to the `canonical' computational model of $\lambda$-calculus and, so by extension, Combinatory Logic. We have demonstrated that SF-calculus can be faithfully encoded (i.e.~simulated) in the $\lambda$-calculus by defining a behaviourally equivalent applicative term rewriting system and then interpreting this system in $\lambda$-calculus using a construction of Berarducci and \Bohm. This result shows that SF-calculus and $\lambda$-calculus are of equivalent computational power, according to the classical interpretation of computational equivalence. We have also considered the relationship of SF-calculus to the $\lambda$-calculus using a more nuanced interpretation of equivalence, informed by research in the literature. Moreover, we hope to have exposed both SF-calculus and Berarducci and {\Bohm}'s encoding to greater prominence. We feel that they are both subjects of great interest which deserve to be better known.

With respect to future work, there is still great scope for investigating the expressiveness of SF-calculus. There are the open questions we have highlighted regarding SF-calculus as it relates to the framework of Boker and Dershowitz. The categorical and denotational natures of SF-calculus also deserve exploration. Beyond this, the wider question of how best to qualify computational expressiveness still remains; we believe the study of SF-calculus can provide further insights on this. For example, the structural completeness property is already a new metric; Jay and Vergara have also considered a strengthening of the notion of decent power equivalence in which the simulation of each model in itself via the composition of the two encodings is computable in each model \cite{JayV14}. Quantitative questions regarding expressiveness also present themselves: e.g.~our encoding of SF-calculus in $\lambda$-calculus leads to a large increase in the size of terms; are there lower bounds on the size of such increases which meaningfully quantify expressive power?

\paragraph{Acknowledgements} We would like to thank Barry Jay, and also the anonymous reviewers for helpful comments in the preparation of the final version of this paper. This research was supported by EPSRC Grant EP/K040049/1.

\bibliographystyle{eptcs}
\bibliography{bib}

\end{document}